\documentclass[		copyright,
	creativecommons,
	noderivs,
	noncommercial,
]{eptcs}

\usepackage{snapshot}
\usepackage{amsmath}
\usepackage{amsthm}

\RequirePackage{xstring}

\let\iftwocolumn\iffalse
\let\ifonecolumn\iftrue

\newtheorem{definition}{Definition}

\newtheorem{theorem}{Theorem}

\newcounter{constrainttype}
\newenvironment{constraints}{	\begin{compactenum}
	\setcounter{enumi}{\value{constrainttype}}
	\let\olditem\item
	\renewcommand{\item}{\olditem \refstepcounter{constrainttype}}
}{	\end{compactenum}
}

\newcommand{\enquote}[1]{``#1''}

\newcommand{\keywords}[1]{\textbf{Keywords:} #1}

\DeclareOption{blindreview}{	\renewcommand{\author}[1]{}
}

\newcommand{\wantsdraftwatermark}{no}
\DeclareOption{preprint}{		\pagenumbering{arabic}
}

																\DeclareOption{final}{								}

\ProcessOptions

\IfStrEq*{\wantsdraftwatermark}{yes}{	\RequirePackage[draft,grayness=0.85,allpages,xcoord=50,ycoord=-205]{sty/draftmark}
}{}

\newcommand{\numberofauthors}[1]{}
\newcommand{\titlenote}[1]{\thanks{#1}}
\newcommand{\affaddr}[1]{#1}
\newcommand{\category}[3]{}
\newcommand{\terms}[1]{}

\newcommand{\fundinghint}{	This work has been partly supported by the German Research Foundation
	(DFG) as part of the Transregional Collaborative Research Center
	``Automatic Verification and Analysis of Complex Systems'' (SFB/TR 14 AVACS).}
\usepackage{latexsym,amssymb}
\usepackage{amscd} \usepackage{xspace}
\usepackage{caption}
\usepackage{subcaption}
\usepackage{graphicx}
\usepackage{hyperref}
\usepackage[mathscr]{euscript}
\usepackage[disable]{todonotes}
\usepackage{paralist}
\usepackage{tabularx}
\RequirePackage{amsmath}
\RequirePackage{amssymb}
\RequirePackage{stmaryrd} \RequirePackage{xspace}
\RequirePackage{nicefrac}

\let\originalleft\left
\let\originalright\right
\renewcommand{\left}{\mathopen{}\mathclose\bgroup\originalleft}
\renewcommand{\right}{\aftergroup\egroup\originalright}

\newcommand{\ie}{i.\,e.}

\newcommand{\wrt}{wrt.\ }

\newcommand{\SetN}{\ensuremath{\mathbb{N}}}
\newcommand{\SetR}{\ensuremath{\mathbb{R}}}

\newcommand{\PowerSet}[1]{\ensuremath{\mathcal{P}\left( #1 \right)}}
\newcommand{\abs}[1]{\ensuremath{\left| #1 \right|}}
\newcommand{\card}[1]{\ensuremath{|#1|}}
\newcommand{\holds}[2][]{#1 : #2}
\newcommand{\set}[2]{\left\{ #1 \;\middle|\; #2 \right\}}
\newcommand{\innerproduct}[2]{\left\langle #1 \;\middle|\; #2\right\rangle}
\newcommand{\timefrac}[2]{\frac{d#1}{d#2}}
\newcommand{\project}[2]{#1_{\downarrow#2}}
\newcommand{\V}[1]{\mathbf{#1}}

\newcommand{\IdentFunc}{\mathrm{id}}
\newcommand{\ZeroFunc}{\mathtt{zero}}
\newcommand{\IdentOf}[1]{\IdentFunc\left(#1\right)}

\newcommand{\tuple}[1]{\left(#1\right)}

\newcommand{\cupdot}{\mathbin{\mathaccent\cdot\cup}}

\newcommand{\impl}{\Rightarrow}

\newcommand{\NP}{\ensuremath{\mathrm{NP}}}

\RequirePackage{amsmath}
\RequirePackage{amssymb}
\RequirePackage{xspace}
\RequirePackage{nicefrac}

\newcommand{\Stabhyli}{\textsc{Stabhyli}\xspace}
\newcommand{\RSolver}{\textsc{RSolver}\xspace}

\newcommand{\Matlab}{\textsc{Matlab}\xspace}

\newcommand{\SOSTools}{\textsc{SOSTools}\xspace}
\newcommand{\YALMIP}{\textsc{YALMIP}\xspace}

\newcommand{\HyAt}{\ensuremath{\mathcal{H}}}
\newcommand{\SetOfModes}{\ensuremath{\mathcal{M}}}
\newcommand{\Mode}{\ensuremath{m}}
\newcommand{\ContStateSpace}{\ensuremath{\mathcal{S}}}

\newcommand{\SetOfVars}{\ensuremath{\mathcal{V}}}

\newcommand{\SetOfTrans}{\ensuremath{\mathcal{T}}}
\newcommand{\MapOfFlow}{\ensuremath{\mathit{Flow}}}
\newcommand{\FlowOf}[1]{\ensuremath{\MapOfFlow\left( #1 \right)}}

\newcommand{\MapOfInvar}{\ensuremath{\mathit{Inv}}}
\newcommand{\InvOf}[1]{\ensuremath{\MapOfInvar\left( #1 \right)}}

\newcommand{\GuardSet}{\ensuremath{\mathit{G}}}
\newcommand{\UpdateFunc}{\ensuremath{\mathit{U}}}
\newcommand{\UpdateOf}[1]{\ensuremath{\UpdateFunc\left(#1\right)}}
\newcommand{\Trans}[1]{\ensuremath{\left(#1\right)}}
\newcommand{\trans}{t}

\newcommand{\Ally}{\ensuremath{\mathit{ST}}}

\newcommand{\Relax}{\ensuremath{\mathit{Rlx}}}
\newcommand{\RelaxApp}[1]{\ensuremath{\Relax\left(#1\right)}}

\newcommand{\traj}{\ensuremath{\tau}}
\newcommand{\trajApp}[1]{\ensuremath{\traj\left(#1\right)}}
\newcommand{\trajC}{\ensuremath{\V{x}}}
\newcommand{\trajCApp}[1]{\ensuremath{\trajC\left(#1\right)}}
\newcommand{\trajM}{\ensuremath{m}}
\newcommand{\trajMApp}[1]{\ensuremath{\trajM\left(#1\right)}}

\newcommand{\LF}[1][]{\ensuremath{V_{#1}}}
\newcommand{\dLF}[1][]{\ensuremath{\dot{\LF}_{#1}}}
\newcommand{\LFof}[2][]{\ensuremath{\LF[#1]\left(#2\right)}}
\newcommand{\dLFof}[2][]{\ensuremath{\dLF[#1]\left(#2\right)}}

\newcommand{\norm}[1]{\ensuremath{\left|\left|#1\right|\right|}}

\newcommand{\classKinf}{\ensuremath{\mathit{K}^{\infty}}}
\newcommand{\NormOf}[2]{\ensuremath{#1\left(\norm{#2}\right)}}
\newcommand{\NormAlpha}{\ensuremath{\alpha}}
\newcommand{\NormBeta}{\ensuremath{\beta}}
\newcommand{\NormGamma}{\ensuremath{\gamma}}
\newcommand{\NormAlphaOf}[1]{\ensuremath{\NormOf{\NormAlpha}{#1}}}
\newcommand{\NormBetaOf}[1]{\ensuremath{\NormOf{\NormBeta}{#1}}}
\newcommand{\NormGammaOf}[1]{\ensuremath{\NormOf{\NormGamma}{#1}}}

\newcommand{\Graph}{\ensuremath{\mathscr{G}}}
\newcommand{\SetOfVertices}{\ensuremath{\mathscr{V}}}
\newcommand{\SetOfEdges}{\ensuremath{\mathscr{E}}}

\usepackage[
	ruled,
	vlined,
	procnumbered,		linesnumbered,		nokwfunc,				]{algorithm2e}

\let\iflong\iffalse

\newfont{\mycrnotice}{ptmr8t at 7pt}
\newfont{\myconfname}{ptmri8t at 7pt}
\DeclareMathAlphabet{\mathcal}{OMS}{cmsy}{m}{n}

\parindent 0.5em
\abovedisplayskip 0mm
\belowdisplayskip 0mm
\clubpenalty=10000
\widowpenalty=10000
\def\sectionautorefname~{Section\,}
\def\subsectionautorefname~{Section\,}
\def\figureautorefname~{Figure\,}
\def\tableautorefname~{Table\,}
\def\theoremautorefname~{Theorem\,}
\def\definitionautorefname~{Definition\,}
\def\corollaryautorefname~{Corollary\,}
\def\ExampleCounterautorefname~{Example\,}
\def\algorithmautorefname~{Algorithm\,}
\def\functionautorefname~{Function\,}
\def\constrainttypeautorefname~{Type\,}
\def\propositionautorefname~{Proposition\,}

\title{Breaking Dense Structures -- Proving Stability of Densely Structured Hybrid Systems\titlenote{\fundinghint}}

\date{\today}

\numberofauthors{1}
\author{	Eike M\"ohlmann and Oliver Theel\institute{	\affaddr{Carl von Ossietzky University of Oldenburg}\\
	\affaddr{Department of Computer Science}\\
	\affaddr{D-26111 Oldenburg, Germany}\\
	\email{\fontsize{9}{9}\selectfont{			\texttt{\{eike.moehlmann,~theel\}@informatik.uni-oldenburg.de}
	}}
	\vspace*{-1em}
}
}

\hyphenation{op-tical net-works semi-conduc-tor}

\begin{document}
\maketitle

\begin{abstract}
\label{sec:abstract}

Abstraction and refinement is widely used in software development. Such
techniques are valuable since they allow to handle even more complex systems.
One key point is the ability to decompose a large system into subsystems,
analyze those subsystems and deduce properties of the larger system. As
cyber-physical systems tend to become more and more complex, such techniques
become more appealing.
\newline
In 2009, Oehlerking and Theel presented a (de-)composition
technique for hybrid systems. This technique is graph-based and
constructs a Lyapunov function for hybrid systems having a complex discrete state
space. The technique consists of
\begin{inparaenum}[(1)]
	\item decomposing the underlying graph of the hybrid system into subgraphs,
	\item computing multiple local Lyapunov functions for the subgraphs, and
		finally
	\item composing the local Lyapunov functions into a piecewise
		Lyapunov function.
\end{inparaenum}
A Lyapunov function can serve multiple purposes, e.g., it certifies stability
or termination of a system or allows to construct invariant sets, which in
turn may be used to certify safety and security.
\newline
In this paper, we propose an improvement to the decomposing technique, which
relaxes the graph structure before applying the decomposition technique. 
Our relaxation significantly reduces the connectivity of
the graph by exploiting super-dense switching. The relaxation makes
the decomposition technique more efficient on one hand and on the other allows
to decompose a wider range of graph structures.

 \end{abstract}
\keywords{	Hybrid Systems,
	Automatic Verification,
	Stability,
	Lyapunov Theory,
	Graphs,
	Relaxation
}
\section{Introduction} \label{sec:introduction}

In this paper, we present a relaxation technique for hybrid systems exhibiting
dense graph structures. It improves the (de-)compositional technique proposed
by
Oehlerking and Theel in \cite{OehlerkingT09}.
The relaxation results in hybrid systems that are
well suited for (de-)composition.
This increases the likeliness of successfully identifying Lyapunov functions.

Throughout the paper, in order to ease readability we will simply write \enquote{decomposition} or \enquote{decompositional technique}
instead of \enquote{(de-)composition} or \enquote{(de-)compositional technique}.

Stability, in general and for hybrid systems in particular, is a very
desirable property, since stable systems are inherently fault-tolerant: after
the occurrence of faults leading to, for example, a changed environment, the
system will automatically \enquote{drive back} to the set of desired (i.e., stable)
states. Stable systems are therefore particularly suited for contexts where
autonomy is important such as for dependable assistance systems or in contexts
where security has to be assured in an adverse environment.

Modeling such real world systems often involves the interaction of embedded
systems (e.g., a controller) and its surrounding environment (e.g., a plant).
Examples of such systems are automatic cruise controllers, engine control
units, or unmanned powerhouses. In all these examples, an optimal operating
range should be maintained. Although it is sometimes possible to discretize
physical relations (using sampling) or to fluidize discrete steps (having a
real-valued count of objects) it is more natural and less error-prone to use
hybrid systems for modeling and verification. This is due to the fact that
hybrid systems allow both: the representation of discrete and continuous
behavior.

For hybrid systems with a complex discrete behavior, the technique proposed in
\cite{OehlerkingT09} decomposes the monolithic problem of proving stability into
multiple subproblems.
But if a hybrid system exhibiting a complex
control structure -- in the sense of a dense graph structure -- is decomposed,
then the blow-up can be enormous. The result is a high number of
subproblems that must be solved -- this is not bad per se. But since the
decompositional technique requires to underapproximate the
feasible sets of each subproblem -- when applied to
often -- results in the feasible set becoming empty. 
The relaxation technique presented in this paper reduces the number of steps
required by the decomposition and, therefore, the number of
underapproximations. This has two benefits: the runtime is reduced
as well as the effect of underapproximations is minimized.

This paper is organized as follows. \autoref{sec:related_work} gives a
brief overview on related work.
In \autoref{sec:preliminaires}, we
define the hybrid system model, the stability property, an adaptation
of the Lyapunov Theorem, and briefly sketch the idea of the decompositional
proof technique.
\autoref{sec:relaxing} describes our improvement to that proof scheme.
In \autoref{sec:experiments}, we apply the relaxation to prove stability of
three examples.
The first example is the automatic cruise controller which
is the motivating example for the decompositional technique. 
The second example is abstract and shows what happens if
decomposition is applied to complete
graph structures.
The last example is a spidercam that exhibits a dense
graph structure for which proving stability using decomposition
is not possible.
Finally, in
\autoref{sec:summary}, we give a short summary.

\section{Related Work} \label{sec:related_work}

In contrast to safety properties, stability has not yet received that much attention
\wrt automatic proving and therefore, only a few tools are available. Indeed
only the following automatic tools -- each specialized for specific system classes --
are known to the authors.
Podelski and Wagner presented a tool in \cite{PodelskiW07} which computes a sequence of snapshots
and then tries to relate the snapshots in decreasing sequence. If
successful, then this
certifies region stability, i.e., stability with respect to a region
instead of a single equilibrium point.
Oehlerking et al.~\cite{OehlerkingBT07} implemented
a powerful state space
partitioning scheme to find Lyapunov functions for linear hybrid
systems.
The \RSolver by Ratschan and She~\cite{RatschanS10} computes Lyapunov-like
functions for continuous system.
Duggirala and Mitra~\cite{DuggiralaM12} combined Lyapunov functions with
searching for a well-foundedness relation for symmetric linear hybrid
systems.
Prabhakar and Garc\'{i}a \cite{PrabhakarS13} presented a technique for proving
stability of hybrid systems with constant derivatives.
Finally, some \Matlab toolboxes (\YALMIP~\cite{YALMIP},
\SOSTools~\cite{sostools}) that require a by-hand generation of constraint
systems for the search of Lyapunov functions are available. These toolboxes do
not automatically prove stability but assist in handling solvers. 

Related theoretical works are the decompositional technique by Oehlerking and
Theel~\cite{OehlerkingT09}, which we aim to improve, and the work on pre-orders for reasoning about
stability in a series of papers by Prabhakar et al. \cite{PrabhakarDV12,Prabhakar12,PrabhakarLM13}
whose aim is a precise characterization of soundness of abstractions for
stability properties. In contrast, our vision is an automatic computational engine for
obtaining Lyapunov functions. The technique and tool presented in \cite{PrabhakarS13}
is also based on abstractions. Unfortunately, their
technique is restricted to hybrid systems whose differential equations have
constant right hand sides while our technique is more general.
However, the techniques are not even mutually exclusive and have the potential
to be combined.

 \section{Preliminaries} \label{sec:preliminaires}

In this section, we give the definitions of the hybrid system model, global
asymptotic stability, and discontinuous Lyapunov functions. Furthermore, we
sketch the decomposition technique of \cite{OehlerkingT09}.

\begin{definition} \label{def:Hybrid_Automaton}
A \textbf{Hybrid Automaton}
\iftwocolumn
is a quintuple
\[ \HyAt = (\SetOfVars, \SetOfModes, \SetOfTrans, \MapOfFlow, \MapOfInvar)
\text{ where} \]
\else
\(\HyAt\)
is a tuple
\((\SetOfVars, \SetOfModes, \SetOfTrans, \MapOfFlow, \MapOfInvar)\)
where
\fi
\begin{compactitem}
	\item \(\SetOfVars\) is a finite set of \emph{variables} and
		\(\ContStateSpace = \SetR^{\abs{\SetOfVars}}\) is the corresponding
		\emph{continuous state space},
	\item \(\SetOfModes\) is a finite set of \emph{modes},
	\item \(\SetOfTrans\) is a finite set of \emph{transitions}
		\(\Trans{\Mode_{1},\GuardSet,\UpdateFunc,\Mode_{2}}\) where
	\begin{compactitem}
		\item \(\Mode_{1}, \Mode_{2} \in \SetOfModes \) are the \emph{source and
			target mode} of the transition, respectively,
		\item \(\GuardSet \subseteq \ContStateSpace\) is a \emph{guard} which
			restricts the valuations of the variables for which this
			transition can be taken,
		\item \(\UpdateFunc : \ContStateSpace \to \ContStateSpace\) is the
			\emph{update function} which might update some valuations of the
			variables,
	\end{compactitem}
	\item \(\MapOfFlow : \SetOfModes \to [\ContStateSpace \to
		\PowerSet{\ContStateSpace}]\) is the \emph{flow function} which
		assigns a \emph{flow} to every mode. A flow \(f \subseteq
		\ContStateSpace \to \PowerSet{\ContStateSpace}\) in turn assigns a
		closed subset of \(\ContStateSpace\) to each \(\V{x} \in
		\ContStateSpace\), which can be seen as the right hand side of a
		differential inclusion \(\dot{\V{x}} \in f(\V{x})\),
	\item \(\MapOfInvar : \SetOfModes \to \PowerSet{\ContStateSpace}\) is the
		\emph{invariant function} which assigns a closed subset of the
		continuous state space to each mode \(\Mode \in \SetOfModes\), and
		therefore restricts valuations of the variables for which this mode
		can be active.
\end{compactitem}
A \emph{trajectory} of \(\HyAt\) is an infinite solution in form of a function
\(\trajApp{t}=(\trajCApp{t}, \trajMApp{t})\) over time 
\(t\) where \(\trajCApp{\cdot}\) describes the evolution 
of the continuous variables and \(\trajMApp{\cdot}\) the corresponding
evolution of the modes.\footnote{	Note, that definition of trajectories given here is for real time, \ie,
	\(t\in\SetR_{\geq 0}\) while solutions of the relaxed hybrid automaton in
	\autoref{sec:relaxing} require a corresponding definition of
	trajectories for dense time, \ie, \(t\in\SetN\times\SetR_{\geq 0}\).
	However, as there is only little difference in our setting and we do not
	directly reason about the solutions of the relaxation, we omit
	corresponding definitions.}
\end{definition} 
Roughly speaking, stability is a property basically expressing that all
trajectories of the system eventually reach an equilibrium point of the
sub-state space and stay in that point forever given the absence of
errors. For technical reasons the equilibrium point is usually assumed to be
the origin of the continuous state space, \ie\ \(\V{0}\). This is not a
restriction, since a system can always be shifted such that the equilibrium is \(\V{0}\)
via a coordinate transformation.
In the sequel, we focus on \emph{asymptotic stability} which does not require the
equilibrium point to be reached in finite time but only requires every
trajectory to \enquote{continuously approach} it (in contrast to \emph{exponential
stability} where additionally the existence of an exponential rate of
convergence is required).

In the following, we refer to \(\project{\V{x}}{\SetOfVars'} \in
\SetR^{\card{\SetOfVars'}}\) as the sub-vector of a vector \(\V{x} \in
\SetR^{\SetOfVars}\) containing only values of variables in \(\SetOfVars'
\subseteq \SetOfVars\).

\begin{definition}[Global Asymptotic Stability with Respect to a Subset of Variables~\cite{Oehlerking2011thesis}] \label{def:GAS}
Let \(\HyAt = (\SetOfVars, \SetOfModes, \SetOfTrans,\linebreak[1] \MapOfFlow,
\MapOfInvar)\) be a hybrid automaton, and let \(\SetOfVars' \subseteq
\SetOfVars\) be the set of variables that are required to converge to the
equilibrium point \(\V 0\). A continuous-time dynamic system \(\HyAt\) is
called \emph{Lyapunov stable (LS)
with respect to \(\SetOfVars'\)}
if for all
functions \(\project{\V{x}}{\SetOfVars'}(\cdot)\),
\[
	\holds[\forall \epsilon\!>\!0 : \: \exists \delta\!>\!0]	{		\holds[\forall t \geq 0]		{			\norm{\V{x}(0)}\!<\!\delta \impl \norm{\project{\V{x}}{\SetOfVars'}(t)}\!<\!\epsilon
		}	}	\text{.}
\]
\(\HyAt\) is called \emph{globally attractive (GA) with respect to \(\SetOfVars'\)}
if for all functions \(\project{\V{x}}{\SetOfVars'}(\cdot)\),
\[
	\lim_{t\to\infty} \project{\V{x}}{\SetOfVars'}(t) = \V{0}
	\text{, \ie,}
	\holds[\forall \epsilon\!>\!0]	{		\holds[\exists t_0\!\geq\!0]		{			\holds[\forall t\!>\!t_{0}]
			{				\norm{\project{\V{x}}{\SetOfVars'}(t)}\!<\!\epsilon
			}		}	}	\text{,}
\]
where \(\V{0}\) is the origin of \(\SetR^{\card{\SetOfVars'}}\). If a system is
both globally stable with respect to \(\SetOfVars'\) and globally attractive
with respect to \(\SetOfVars'\), then it is called \emph{globally asymptotically
stable (GAS) with respect to \(\SetOfVars'\).}
\end{definition} Intuitively, LS is a boundedness condition, \ie, each trajectory starting
\(\delta\)-close
to the origin will remain \(\epsilon\)-close to the origin. GA
ensures progress, \ie, for each \(\epsilon\)-distance to the origin, there exists a
point in time \(t_0\) such that afterwards a trajectory always remains within this
distance.
It follows, that each trajectory is eventually always approaching
the origin.
This property can be proven using Lyapunov Theory~\cite{Lya07}.
Lyapunov Theory was originally restricted to continuous systems
but has been lifted to hybrid systems.
\begin{theorem}[Discontinuous Lyapunov Functions for a subset of variables~\cite{Oehlerking2011thesis}] \label{thm:lyapunov}
Let \(\HyAt = (\SetOfVars,\SetOfModes,\SetOfTrans,\MapOfFlow,\linebreak[1]\MapOfInvar)\) be
a hybrid automaton and let \(\SetOfVars' \subseteq \SetOfVars\) be the set of
variables that are required to converge. If for each \(\Mode \in
\SetOfModes\),
there exists a set of variables \(\SetOfVars_{\Mode}\) with \(\SetOfVars'
\subseteq \SetOfVars_{\Mode} \subseteq \SetOfVars\) and a continuously
differentiable function \(\LF[\Mode] : \ContStateSpace \to \SetR\) such that
\begin{constraints}
	\item \label{thm:lyapunov:mode}
		for each \(\Mode \in \SetOfModes\), there exist two class
		\(\classKinf\) functions \(\NormAlpha\) and \(\NormBeta\) such that
		\[
		\holds[\forall \V{x} \in \InvOf{\Mode}]{
			\NormAlphaOf{\project{\V{x}}{\SetOfVars_{\Mode}}}
			\leq
			\LFof[\Mode]{\V{x}}
			\leq
			\NormBetaOf{\project{\V{x}}{\SetOfVars_{\Mode}}}}
		\text{,}
		\]
	\item \label{thm:lyapunov:flow}
		for each \(\Mode \in \SetOfModes\), there exists a class
		\(\classKinf\) function \(\NormGamma\) such that
		\[
		\holds[\forall \V{x} \in \InvOf{\Mode}]{\dLFof[\Mode]{\V{x}} \leq
		-\NormGammaOf{\project{\V{x}}{\SetOfVars_{\Mode}}}}
		\]
		for each
		\(\dLFof[\Mode]{\V{x}} \in \set{\innerproduct{\timefrac{\LFof[\Mode]{\V{x}}}{\V{x}}}{f(\V{x})}}{f(\V{x}) \in \FlowOf{\Mode}}\),
	\item \label{thm:lyapunov:trans}
		for each \(\Trans{\Mode_{1},\GuardSet,\UpdateFunc,\Mode_{2}} \in
		\SetOfTrans\),
		\[
		\holds[\forall \V{x} \in \GuardSet]{\LFof[\Mode_{2}]{\UpdateOf{\V{x}}} \leq \LFof[\Mode_{1}]{\V{x}}}
		\text{,}
		\]
\end{constraints}
then \(\HyAt\) is globally asymptotically stable with respect to
\(\SetOfVars'\) and \(\LF[\Mode]\) is called a \emph{local Lyapunov function
(LLF)} of \(\Mode\).
\end{theorem}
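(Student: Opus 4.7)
The plan is to promote the collection of local Lyapunov functions $\LF[\Mode]$ into a single, trajectory-wise function $W(t) := \LFof[\trajMApp{t}]{\trajCApp{t}}$ and then recover LS and GA from the class-$\classKinf$ sandwich bounds together with strict decrease along flows and non-increase across transitions. First I would verify that $W$ is well-defined and monotonically non-increasing along every trajectory $\traj = (\trajC,\trajM)$ of $\HyAt$. During any maximal interval on which $\trajMApp{\cdot}$ stays in a single mode $\Mode$, condition \ref{thm:lyapunov:flow} yields $\tfrac{d}{dt}W(t) \leq -\NormGammaOf{\project{\trajCApp{t}}{\SetOfVars_{\Mode}}} \leq 0$ for every selection from the differential inclusion; at a discrete transition $\Trans{\Mode_{1},\GuardSet,\UpdateFunc,\Mode_{2}}$, condition \ref{thm:lyapunov:trans} ensures $W$ does not jump upward. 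Hence $W$ is non-increasing in $t$.

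For Lyapunov stability \emph{with respect to $\SetOfVars'$}, I would fix $\epsilon>0$ and, using that $\SetOfVars' \subseteq \SetOfVars_{\Mode}$ implies $\norm{\project{\V{x}}{\SetOfVars'}} \leq \norm{\project{\V{x}}{\SetOfVars_{\Mode}}}$, chain the two halves of condition \ref{thm:lyapunov:mode}: for any $t\geq 0$ and $\Mode = \trajMApp{t}$,
\[
\NormAlphaOf{\project{\trajCApp{t}}{\SetOfVars'}}
\leq \NormAlphaOf{\project{\trajCApp{t}}{\SetOfVars_{\Mode}}}
\leq W(t) \leq W(0)
\leq \NormBetaOf{\project{\trajCApp{0}}{\SetOfVars_{\trajMApp{0}}}}
\leq \NormBetaOf{\trajCApp{0}}.
\]
Since $\NormAlpha,\NormBeta \in \classKinf$ are continuous and strictly increasing with $\NormAlpha(0)=\NormBeta(0)=0$, I pick $\delta>0$ with $\NormBeta(\delta) < \NormAlpha(\epsilon)$ and conclude $\norm{\project{\trajCApp{t}}{\SetOfVars'}} < \epsilon$ whenever $\norm{\trajCApp{0}} < \delta$.

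For global attractivity I would combine the fact that $W(t)\geq 0$ and non-increasing implies $W(t) \to W_{\infty} \geq 0$, with the integrated form of \ref{thm:lyapunov:flow}. Suppose $W_{\infty} > 0$; then by \ref{thm:lyapunov:mode} the projection $\project{\trajCApp{t}}{\SetOfVars_{\trajMApp{t}}}$ stays bounded away from $\V{0}$ in norm by some $r>0$ during all flow intervals, so $\tfrac{d}{dt}W(t) \leq -\NormGamma(r) < 0$ almost everywhere during continuous evolution. Unless the trajectory spends only a finite total amount of time in flow segments (which, for the hybrid semantics considered here, would mean the trajectory is Zeno in a way precluded by the standard definition of a trajectory over $t\in\SetR_{\geq 0}$), this forces $W(t)\to -\infty$, contradicting $W\geq 0$. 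Hence $W_{\infty}=0$, and applying $\NormAlpha^{-1}$ to $\NormAlphaOf{\project{\trajCApp{t}}{\SetOfVars'}} \leq W(t)$ gives $\project{\trajCApp{t}}{\SetOfVars'} \to \V{0}$, which is GA with respect to $\SetOfVars'$; together with LS this is GAS.

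The main obstacle is the last step: on hybrid trajectories the strict-decrease condition only bites on continuous portions, so one has to rule out pathological switching behavior that could let $W$ plateau despite $\project{\V x}{\SetOfVars'}$ staying away from $\V 0$. I would argue this from the assumption that trajectories are defined for all $t\in\SetR_{\geq 0}$ (so infinitely many transitions accumulate only as limits of vanishing flow durations, in which case condition \ref{thm:lyapunov:trans} prevents $W$ from rising and the $\NormAlpha$-bound already pins $\project{\trajCApp{t}}{\SetOfVars'}$ below any desired $\epsilon$); a fully rigorous treatment mirrors the standard discontinuous Lyapunov argument from \cite{Oehlerking2011thesis}, to which we defer for the detailed case analysis.
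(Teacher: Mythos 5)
You should know at the outset that the paper contains no proof of \autoref{thm:lyapunov} to compare against: the theorem is imported verbatim from \cite{Oehlerking2011thesis}, and the paper only uses it as a black box. Judged on its own, your argument is the standard discontinuous (multiple) Lyapunov function proof --- set $W(t) = \LFof[\trajMApp{t}]{\trajCApp{t}}$, establish monotone non-increase, sandwich for LS, limit argument for GA --- and its skeleton is sound, but two points need repair. First, the comparison functions in the theorem are quantified \emph{per mode}: each $\Mode$ has its own $\NormAlpha$, $\NormBeta$, $\NormGamma$, yet your LS chain applies one $\NormAlpha$ at time $t$ (mode $\trajMApp{t}$) and one $\NormBeta$ at time $0$ (mode $\trajMApp{0}$), and your GA step uses a single rate $\NormGamma(r)$ across all flow segments of all modes. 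This is fixable precisely because $\SetOfModes$ is finite: replace $\NormAlpha$ by $\min_{\Mode}\NormAlpha_{\Mode}$, $\NormBeta$ by $\max_{\Mode}\NormBeta_{\Mode}$, and $\NormGamma$ by $\min_{\Mode}\NormGamma_{\Mode}$, noting that pointwise min/max of finitely many class $\classKinf$ functions is again class $\classKinf$ --- but you must say this, since the uniformization is exactly what makes the mode-indexed hypotheses usable trajectory-wise.

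Second, your closing paragraph correctly identifies the real obstacle (a plateau of $W$ under pathological switching) but then defers it to \cite{Oehlerking2011thesis}; in a blind, self-contained proof that deferral is the gap, and it is unnecessary given the semantics the paper fixes. By the footnote to \autoref{def:Hybrid_Automaton}, trajectories here are functions of real time $t \in \SetR_{\geq 0}$ with instantaneous discrete transitions, so on any window $[0,t]$ the jump instants form a countable, measure-zero set and the state flows at almost every instant. If $W_{\infty} > 0$, then condition \ref{thm:lyapunov:mode} gives $\norm{\project{\trajCApp{s}}{\SetOfVars_{\trajMApp{s}}}} \geq r := \NormBeta^{-1}(W_{\infty}) > 0$ for all $s$, and summing the flow-interval decreases (each $\LF[\Mode]$ is continuously differentiable, so $W$ is absolutely continuous on each flow interval) with the non-positive jump increments from condition \ref{thm:lyapunov:trans} yields $W(t) \leq W(0) - \NormGamma(r)\,t \to -\infty$, contradicting $W \geq 0$. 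The \enquote{finite total flow time} escape you hedge against simply cannot occur in this semantics, so the argument closes without appeal to the thesis. A final small convention to state: fix right-continuity of $\trajMApp{\cdot}$ so that $W$ is well-defined at jump instants.
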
 In \autoref{thm:lyapunov}, \(\innerproduct{\timefrac{\LFof{\V{x}}}{\V{x}}}{f(\V{x})}\) denotes
the inner product between the gradient of a Lyapunov function \(\LF\) and a
flow function \(f(\V{x})\).
Throughout the paper we denote by \emph{mode
constraints} the constraints of \autoref{thm:lyapunov:mode}
and \autoref{thm:lyapunov:flow} and by \emph{transition constraints} the
constraints of \autoref{thm:lyapunov:trans}.

\subsection*{Decompositional Construction of Lyapunov Functions}
\label{sec:decomposition}

In this section we briefly introduce the decompositional construction of Lyapunov
functions for self-containment and refer to \cite{OehlerkingT09} for the
details.

\begin{figure} 	\centering
	\hfill
	\begin{subfigure}{0.32\linewidth}
		\centering
		\includegraphics[width=0.8\linewidth]{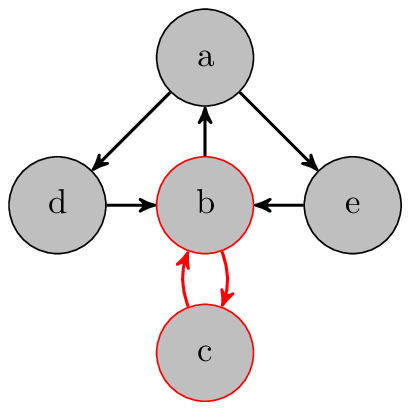}
		\caption{Selection of a Cycle} 
		\label{fig:decomposition:1}
	\end{subfigure}
	\hfill
	\begin{subfigure}{0.32\linewidth}
		\centering
		\includegraphics[width=0.8\linewidth]{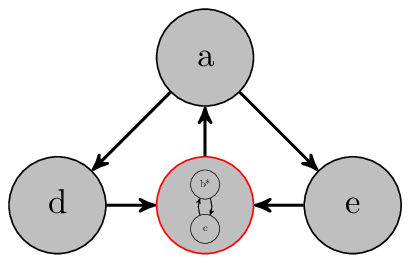}
		\caption{After a Reduction Step}
		\label{fig:decomposition:2}
	\end{subfigure}
	\hfill
	\begin{subfigure}{0.32\linewidth}
		\centering
		\includegraphics[width=0.8\linewidth]{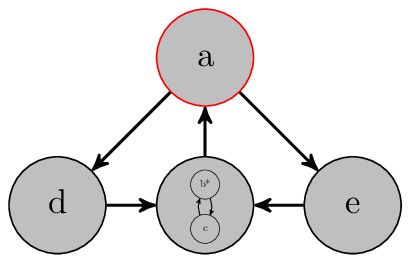}
		\caption{Selection of a Mode to Split}
		\label{fig:decomposition:4}
	\end{subfigure}
	\hfill
	\begin{subfigure}{0.32\linewidth}
		\centering
		\includegraphics[width=0.8\linewidth]{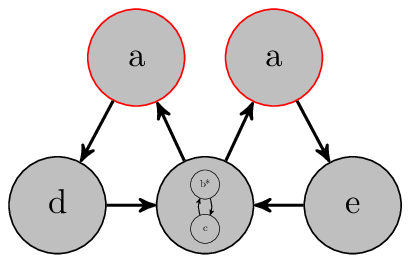}\\\vspace{-1.5cm}
		\caption{After a Mode-splitting step}
		\label{fig:decomposition:5}
	\end{subfigure}
	\hfill
	\begin{subfigure}{0.32\linewidth}
		\centering
		\includegraphics[width=0.8\linewidth]{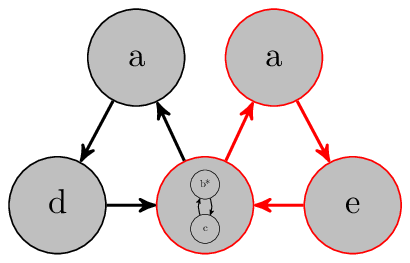}\\\vspace{-1.5cm}
		\caption{Selection of a Cycle}
		\label{fig:decomposition:6}
	\end{subfigure}
	\hfill
	\begin{subfigure}{0.32\linewidth}
		\centering
		\includegraphics[width=0.8\linewidth]{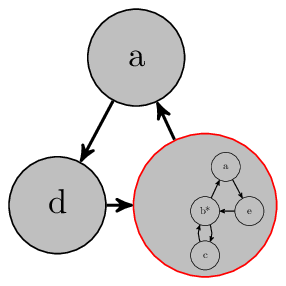}\\\vspace{-1.5cm}
		\caption{After a Reduction Step}
		\label{fig:decomposition:7}
		\label{fig:decomposition:last}
	\end{subfigure}
	\hfill
	\caption{A Sketch of the Decomposition} 
	\label{fig:decomposition}
	\vspace*{-1em}
\end{figure} 
The decomposition technique introduces a so-called
\emph{constraint graph}. In the constraint graph, vertices are labeled with
mode constraints and transition constraints for self-loops, \ie,
\(\Mode_1=\Mode_2\)
while edges are labeled with transition constraints for
non-self-loops, \ie, \(\Mode_1\neq\Mode_2\). Obviously, any solution to the constraint graph is a solution
to \autoref{thm:lyapunov}. The graph structure is exploited in two ways:
\begin{enumerate}[1)]
	\item The constraint graph is partitioned into
		finitely many strongly connected components (SCCs). A trajectory
		entering an SCC of the corresponding hybrid automaton may either
		converge to 0 within the SCC or leave the SCC in finite time. In any
		case, once entered, an SCC might not be entered again. This allows us to
				compute LLFs for each SCC separately.			\item Each SCC is
		further partitioned into (overlapping) cycles. LLFs for modes
		in a cycle can also be computed separately but compatibility 
		-- \wrt constraints on the edges --
		has to be assured somehow. Compatibility can be guaranteed if the
		cycles are examined successively in
		the following way: A cycle is selected and replaced by an
		underapproximation of the feasible set of its constraints, \ie,
		finitely many solutions (candidate LLFs) to the constraints of that
		cycle. Since the constraints describe a convex problem, conical
		combinations of the candidate LLFs satisfy the constraints, too. This
		step is called a \emph{reduction step}. The reduction step collapses all
		vertices that lie only on that cycle and
		replaces references to LLFs in the constraints of adjacent edges by
		conical combinations of the candidate LLFs.
		This allows us to prove stability of each cycle separately while,
		cycle-by-cycle, ensuring compatibility of the feasible sets of
		the (overlapping) cycles.
		\newline
		The reduction step is visualized in \autoref{fig:decomposition:1}
		and \autoref{fig:decomposition:2}: In the former, a cycle is
		selected and in the latter, the cycle is replaced
		by a finite set of solutions of the corresponding optimization problem
		-- visualized by collapsing the cycle into a single vertex.
\end{enumerate}

\begin{figure}[b]
\centering
\hfill
\begin{subfigure}{0.35\linewidth}
	\centering
\includegraphics[height=2.5cm]{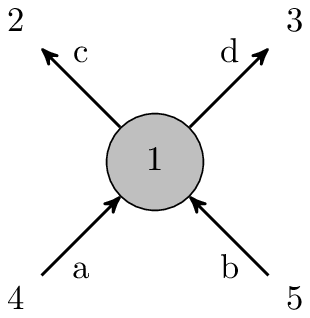}
\caption{Before splitting} 
\label{fig:splitting_before}
\end{subfigure}
\hfill
\begin{subfigure}{0.55\linewidth}
	\centering
\includegraphics[height=2.5cm]{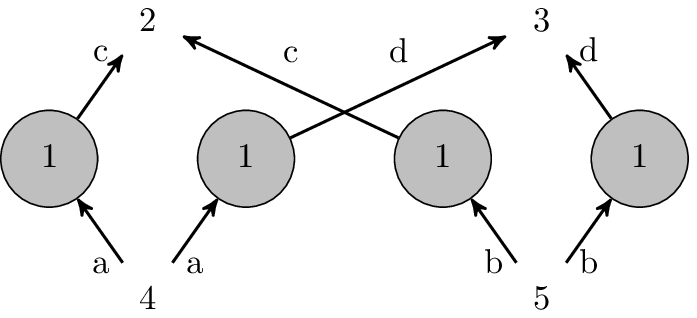}
\caption{After splitting} 
\label{fig:splitting_after}
\end{subfigure}
\hfill
\caption{The Mode-Splitting Step} 
\label{fig:splitting}
\vspace*{-1em}
\end{figure}
The reduction step is more efficient if 
	the cycle is connected to the rest of the graph by at most one vertex.
We call such a cycle an \emph{outer cycle}
and the vertex a \emph{border vertex}.
On one hand, if the graph contains an outer cycle, then 
the cycle can be collapsed into a single vertex which replaces the border
vertex. Thus, the feasible set of the cycle's constraints is replaced by a set of
candidate LLFs. 
On the other hand, if the graph does not contain an outer cycle, then another
step, called a
\emph{mode-splitting step}, is performed. In the mode-splitting step, a
single vertex is
replaced by a copy per pair of incoming and outgoing edges. This is visualized
\autoref{fig:splitting}. In \autoref{fig:splitting_before}, vertex
\texttt{1} is
connected to four other vertices by two incoming and two outgoing edges. In
\autoref{fig:splitting_after}, vertex \texttt{1} is replaced by four copies,
where each one is connected to exactly one incoming and one outgoing edge.
Depending on the order in which vertices are chosen for mode-splitting,
one can make a cycle connected to the rest of the graph by exactly one vertex and then perform a
reduction step. Clearly, the order of mode-splitting and reduction steps does not
only affect the termination of the procedure, but also the size of the
graph and, therefore, the
number of cycles that have to be reduced.
With a good order of reduction and mode-splitting steps, one ends up with a
single cycle for which the following holds: The successful computation of candidate LLFs
implies the existence of a piecewise Lyapunov function for the whole SCC.

Continuing on the example given in \autoref{fig:decomposition}: In
\autoref{fig:decomposition:4}, there are no outer cycles, thus, a mode-splitting
step is performed: the vertex \texttt{a} is selected, copied twice, and
each path is routed through one copy. The result is shown in
\autoref{fig:decomposition:5}. Since the result contains outer cycles, we can
select an outer cycle as in \autoref{fig:decomposition:6} and
perform another reduction step resulting a single cycle being left. \autoref{fig:decomposition:7}
shows the result.

\subsection*{Automatically Computing Lyapunov Functions}

To compute Lyapunov functions needed for decomposition as well as for the
monolithic approaches each Lyapunov function is instantiated by a template
involving free parameters. Using this Lyapunov function templates a constraint
system corresponding to \autoref{thm:lyapunov} is generated. Such a
constraint system is then relaxed by a series of relaxations involving 
\begin{inparaenum}
	\item the so-called \emph{S-Procedure}~\cite{Boyd2004Convex} which
		restricts the constraints to certain regions and
	\item the \emph{sums-of-squares (SOS)}
		decomposition~\cite{Prajna03analysisof}
		which allows us to rewrite the polynomials
		as linear matrix inequalities (LMI).
\end{inparaenum}
These LMIs in turn can be solved by Semidefinite Programming
(SDP)~\cite{Boyd2004Convex}. Instances of solvers are
CSDP~\cite{Borchers99csdp} and
SDPA~\cite{Fujisawa2007sdpa}. These solvers typically use some kind of
interior point methods and numerically approximate a solution.
While this is very fast, such numerical solvers sometimes suffer from
numerical inaccuracies. Therefore, constraints may be \emph{strengthened}
by adding additional \enquote{gaps}. These gaps make the constraints more
robust against numerical issues but sometimes result in the feasible set
becoming empty.

These the gaps further limit the use of the decomposition as each reduction
now \enquote{doubly} shrinks the feasible set: via gaps and
via computing finitely many candidate LLFs.

 \section{Relaxation of the Graph Structure} \label{sec:relaxing}

In this section, we show how the decomposition can be improved by our
graph structure-based relaxation. Consider the \emph{underlying digraph}
\(\Graph=(\SetOfVertices,\SetOfEdges)\) of a hybrid automaton with the set of
vertices \(\SetOfVertices=\SetOfModes\) and the set of edges
\(\SetOfEdges=\set{\tuple{\Mode_1,\Mode_2}}{\exists \tuple{\Mode_1,\GuardSet,\UpdateFunc,\Mode_2} \in \SetOfTrans}\).
Note that the underlying graph has at most a single edge
between any two vertices while the hybrid automaton might have multiple
transitions between two modes.
The \emph{density} of the graph $\Graph$ is the fraction of the number of edges in
the graph and the maximum possible number of edges in a graph of the same
size, \ie,
\(\frac{\card{\SetOfEdges}}{\card{\SetOfVertices}(\card{\SetOfVertices}-1)}\).\footnote{We are referring to the definition of density for directed
graphs.}

The idea is to identify a set of modes of a
hybrid automaton whose graph structure is dense. 
This can, for example, be done
by a clique-finding or dense-subgraph-finding algorithm.
A \emph{clique} is a complete subgraph, \ie, having a density of \(1\).\footnote{Finding the maximum clique is
	\NP-hard. However, a maximum clique is not required, any maximal clique (with
	more than two vertices) is sufficient. Even better as we are interested in dense
	structures only, we can use quasi cliques. A \emph{quasi clique} is a
	subgraph where the 
			density
	is not less than a certain threshold. Thus,
	any greedy algorithm can be used.
}
Our relaxation then rewires the transitions such that the resulting automaton
immediately exhibits a structure well-suited for decomposition. By
\enquote{well-suited,} we mean that the graph structure contains mainly outer
cycles.

The reason, that our relaxation technique plays so well with the decomposition
technique, is as follows: if a hybrid system exhibits a dense graph structure, then the
decomposition results in a huge blow-up. This blow-up is a result of the
splitting step. The splitting step separates vertices shared between cycles, \ie,
if there is more than one vertex shared between two or more cycles, then
multiple copies are created. Thus, the higher the density of the graph
structure is, the higher the blow-up gets. Further, if many cycles share many
vertices -- as in dense graphs -- then whole cycles get copied and each copy
requires solving an optimization problem and underapproximating the
problem's feasible set. In contrast, our relaxation overapproximates the
discrete behavior by putting each vertex in its own cycle and connecting this
vertex by a new \enquote{fake} vertex. This reduces the number of optimization
problems to be solved and the number of feasible sets to be
underapproximated.

In the following, we define the relaxation operator. Then we give an algorithm which
applies the relaxation integrated with decomposition. Finally, we prove
termination and implication of stability of the hybrid automaton which has
been relaxed.

\begin{definition}
\label{def:cs_relax}
The graph structure relaxed hybrid automaton
\(
	\RelaxApp{\HyAt,\SetOfModes_{d}}
	= (\SetOfVars^{\sharp}, \SetOfModes^{\sharp}, \SetOfTrans^{\sharp},
	\MapOfFlow^{\sharp}, \MapOfInvar^{\sharp})
	= \HyAt^{\sharp}
\)
of a hybrid automaton
\(
	\HyAt = (\SetOfVars, \SetOfModes, \SetOfTrans, \MapOfFlow, \MapOfInvar)
\)
\wrt the sub-component \(\SetOfModes_{d} \subseteq \SetOfModes\) is defined as
follows
\begin{align*}
	\SetOfVars^{\sharp}  = \;& \SetOfVars, \\
	\SetOfModes^{\sharp} = \;& \SetOfModes \cupdot \{\Mode_{c}\},\\
\iftrue
	\SetOfTrans^{\sharp} = \;&
		\set{			\Trans{\Mode_{1},\GuardSet,\UpdateFunc,\Mode_{2}} 		}{			\begin{matrix}
			\Trans{\Mode_{1},\GuardSet,\UpdateFunc,\Mode_{2}} \in \SetOfTrans,
			\\
			\{\Mode_{1},\Mode_{2}\} \cap \SetOfModes_{d} = \emptyset
			\end{matrix}
		} \\
		& \bigcup \set{			\begin{matrix}
			\Trans{\Mode_{1},\GuardSet,\IdentFunc,\Mode_{c}}, \\
			\Trans{\Mode_{c},\GuardSet,\UpdateFunc,\Mode_{2}}
			\end{matrix}
		}{			\begin{matrix}
			\Trans{\Mode_{1},\GuardSet,\UpdateFunc,\Mode_{2}} \in \SetOfTrans,
			\\
			\{\Mode_{1},\Mode_{2}\} \cap \SetOfModes_{d} \neq \emptyset
			\end{matrix}
		}, \\
\else
	\SetOfTrans^{\sharp} = \;& (\SetOfTrans \setminus
		\set{			\Trans{\Mode_{1},\GuardSet,\UpdateFunc,\Mode_{2}} 		}{			\{\Mode_{1},\Mode_{2}\} \cap \SetOfModes_{d} \neq \emptyset
		} ) \\
		& \bigcup \set{			\begin{matrix}
			\Trans{\Mode_{1},\GuardSet,\IdentFunc,\Mode_{c}}, \\
			\Trans{\Mode_{c},\GuardSet,\UpdateFunc,\Mode_{2}}
			\end{matrix}
		}{			\begin{matrix}
			\Trans{\Mode_{1},\GuardSet,\UpdateFunc,\Mode_{2}} \in \SetOfTrans,
			\\
			\{\Mode_{1},\Mode_{2}\} \subseteq \SetOfModes_{d}
			\end{matrix}
		} \\
		& \bigcup \set{			\begin{matrix}
			\Trans{\Mode_{1},\GuardSet,\IdentFunc,\Mode_{c}}, \\
			\Trans{\Mode_{c},\GuardSet,\UpdateFunc,\Mode_{2}}
			\end{matrix}
		}{			\begin{matrix}
			\Trans{\Mode_{1},\GuardSet,\UpdateFunc,\Mode_{2}} \in \SetOfTrans,
			\\
			\Mode_{1} \in \SetOfModes_{d},
			\Mode_{2} \not\in \SetOfModes_{d}, \\
			\end{matrix}
		} \\
		& \bigcup \set{			\begin{matrix}
			\Trans{\Mode_{1},\GuardSet,\IdentFunc,\Mode_{c}}, \\
			\Trans{\Mode_{c},\GuardSet,\UpdateFunc,\Mode_{2}}
			\end{matrix}
		}{			\begin{matrix}
			\Trans{\Mode_{1},\GuardSet,\UpdateFunc,\Mode_{2}} \in \SetOfTrans,
			\\
			\Mode_{1} \not\in \SetOfModes_{d},
			\Mode_{2} \in \SetOfModes_{d}, \\
			\end{matrix}
		}, \\
\fi
	\MapOfFlow^{\sharp}\left(\Mode\right) = \;&
		\begin{cases}
			\ZeroFunc &\text{ if } \Mode = \Mode_{c} \\
			\FlowOf{\Mode} &\text{ otherwise, }
		\end{cases} \\
	\MapOfInvar^{\sharp}\left(\Mode\right) = \;&
		\begin{cases}
			\emptyset &\text{ if } \Mode = \Mode_{c} \\
			\InvOf{\Mode} &\text{ otherwise, }
		\end{cases}
\end{align*}
where \(\ZeroFunc : \ContStateSpace \to \PowerSet{\ContStateSpace}\) is a function assigning \(\V{0}\) to each \(\V{x}\in\ContStateSpace\),
\ie, \(\dot{\V{x}}\in\{\V{0}\}\).
\end{definition}

In \(\SetOfTrans^{\sharp}\) in \autoref{def:cs_relax}, we replace each transition
\(\tuple{\Mode_1,\GuardSet,\UpdateFunc,\Mode_2} \in \SetOfTrans\)
connected to at least one mode in \(\SetOfModes_{d}\) with two transitions:
one connecting the old source mode \(\Mode_1\) with the new mode \(\Mode_{c}\)
and the other
connecting \(\Mode_{c}\) with the old target mode \(\Mode_2\).
We call this step a \emph{transition-splitting step} where the result is a pair of
transitions which is called \emph{split transition} and
the set of all split transitions is denoted by \(\Ally\).

Intuitively, the introduced mode \(\Mode_c\) is a dummy mode whose
invariant always evaluates to false and the flow function does not change the
valuations of the continuous variables. Indeed, the mode cannot be entered and
thus, a trajectory taking an ingoing transition must, immediately, take an outgoing
transition. The sole reason to add the mode is changing the
structure of the hybrid system's underlying graph: the new structure
contains mainly cycles that are connected via \(\Mode_c\).

\begin{algorithm}[t]
\SetKwInOut{Input}{input}
\SetKwInOut{Output}{output}
\Input{A hybrid automaton $\HyAt$, a dense sub-component $\SetOfModes_{d}$ of $\HyAt$.}
\Output{The relaxed version of $\HyAt$, a set of split transitions $\Ally$, the central mode $\Mode_{c}$.}
\(\Mode_{c} \leftarrow\) \FuncSty{newMode()}\;
\(\HyAt.\SetOfModes \leftarrow \HyAt.\SetOfModes \cup \{\Mode_{c}\}\)\;
\(\HyAt.\FlowOf{\Mode_{c}} \leftarrow \ZeroFunc\)\;
\(\HyAt.\InvOf{\Mode_{c}} \leftarrow \emptyset\)\;
\(\SetOfTrans \leftarrow \HyAt.\SetOfTrans\)\;
\ForEach{\(\trans = \Trans{\Mode_{1},\GuardSet,\UpdateFunc,\Mode_{2}} \in \SetOfTrans\)}{	\If{\(\{\Mode_{1},\Mode_{2}\} \cap \SetOfModes_{d} \neq \emptyset\)}{		\tcp{split the transitions into two parts}
		\(\trans_1 \leftarrow \Trans{\Mode_{1},\GuardSet,\IdentFunc,\Mode_{c}}\)\;
		\(\trans_2 \leftarrow \Trans{\Mode_{c},\GuardSet,\UpdateFunc,\Mode_{2}}\)\;
		\tcp{replace the transition by the two parts}
		\(\HyAt.\SetOfTrans \leftarrow (\HyAt.\SetOfTrans \setminus \{\trans\} ) \cup
		\{\trans_1,\trans_2\}\)\;
		\tcp{keep account of split transitions}
		\(\Ally \leftarrow \Ally \cup \{(\trans_1,\trans_2)\}\)
	}
}
\caption{The Relaxation Function}
\label{alg:func_relax}
\end{algorithm}

\begin{algorithm}[t]
\SetKwInOut{Input}{input}
\SetKwInOut{Output}{output}
\Input{A relaxed hybrid automaton \(\HyAt\), a set of split transitions \(\Ally\),
	a pair of split transitions \(\tuple{\trans_1,\trans_2}\),
	where \(\trans_1 = \Trans{\Mode_1,\GuardSet,\IdentFunc,\Mode_c}\),
	\(\trans_2 = \Trans{\Mode_c,\GuardSet,\UpdateFunc,\Mode_2}\) and
	\(\Mode_{c}\) is the central mode.
}
\Output{A relaxed hybrid automaton \(\HyAt\) with one split transition being
reconstructed, the set of split transitions \(\Ally\)}
\tcp{reconstruct the original transition}
\(\trans \leftarrow \Trans{\Mode_1,\GuardSet,\UpdateFunc,\Mode_2}\)\;
\tcp{replace the split transition \((\trans_1,\trans_2)\) by \(\trans\)}
\(\HyAt.\SetOfTrans \leftarrow (\HyAt.\SetOfTrans \setminus \{\trans_1,\trans_2\} ) \cup \{\trans\}\)\;
\tcp{update the set of split transitions}
\(\Ally \leftarrow \Ally \setminus \{(\trans_1,\trans_2)\}\)\;
\tcp{remove \(\Mode_{c}\) iff unconnected}
\If{\(\Ally = \emptyset\)}{	\(\HyAt.\SetOfModes \leftarrow \HyAt.\SetOfModes \setminus \{\Mode_{c}\}\)\;
}
\caption{The Reconstruction Function}
\label{alg:func_reconstruct}
\end{algorithm}

\begin{algorithm}[t]
\SetKwInOut{Input}{input}
\SetKwInOut{Output}{output}
\Input{A hybrid automaton \(\HyAt\), a set of modes \(\SetOfModes_{d}\)
	corresponding to a dense subgraph.
}
\Output{\KwSty{stable} if the \(\HyAt\) is stable and \KwSty{failed}
	otherwise.
}
\tcp{relax the graph structure}
\(\HyAt,\Ally,\Mode_{c} \leftarrow\) \FuncSty{relax(\(\HyAt,\SetOfModes_{d}\))}\;
\While{\(\Ally \neq \emptyset\)}{	\tcp{apply decomposition}
	\DataSty{result} $\leftarrow$ \FuncSty{applyDecomposition(\(\HyAt\))}\;
	\If{\DataSty{result} is \KwSty{stable}}{		\Return{\KwSty{stable}}\;	}
	\tcp{apply reconstruction}
			\uIf{\(\exists (\trans_1,\trans_2) \in \Ally: \{\trans_1,\trans_2\} \cap
		\FuncSty{failedSubgraph(\DataSty{result})} \neq \emptyset
	\)}{				\(\HyAt,\Ally \leftarrow\)
		\FuncSty{reconstruct(\(\HyAt,\Ally,\tuple{\trans_1,\trans_2},\Mode_{c}\))}\;
	}
	\Else{		\Return{\DataSty{result}}\;
	}
}
\tcp{apply decomposition on the original automaton}
\DataSty{result} $\leftarrow$ \FuncSty{applyDecomposition(\(\HyAt\))}\;
\Return{\KwSty{\DataSty{result}}}\;
\caption{The Integrated Relaxation and Decomposition Algorithm}
\label{alg:overall}
\end{algorithm}

Next, we show how to integrate decomposition and relaxation.
Pseudo-code of the relaxation function and a reconstruction function -- which
step-by-step reverts the relaxation -- can be found in 
\autoref{alg:func_relax} and \autoref{alg:func_reconstruct}, respectively.
\autoref{alg:overall} gives pseudo-code of the main algorithm.
The main algorithm works as follows:
\begin{inparaenum}[Step 1)]
	\item The function \FuncSty{relax} relaxes the graph structure of the hybrid
		automaton \(\HyAt\) and generates the set of
		split transitions \(\Ally\). 
	\item If the set \(\Ally\) is empty, then call
		\FuncSty{applyDecomposition} with the original automaton and return the result
		-- this function applies the original decompositition technique as described
		in \autoref{sec:decomposition}.
	\item Otherwise, apply
		\FuncSty{applyDecomposition} on the current relaxed form of the
		automaton. If the result is \texttt{stable}, then return the result.
		Otherwise, if
		the original decompositional technique has failed, then it returns a
		\emph{failed subgraph} that is a subgraph
		for which it was unable to find Lyapunov functions.
	\item Choose a split transition
		from the set \(\Ally\) which also belongs to the failed subgraph. It
		is then used to reconstruct a transition from the original hybrid
		automaton.  Then execution is continued with step 2.
	\item If no such split transition exists,
				then the algorithm fails and returns the failed subgraph
		since this failing subgraph will persist in the automaton. 
		Further reverting the relaxation cannot help because no
		split transition is contained in the failed subgraph.
\end{inparaenum}

Next, we prove termination and soundness of the algorithm. Here, soundness 
indicates that a Lyapunov function-based stability certificate
for a relaxed automaton implies stability of the original, unmodified
automaton. In particular, the local Lyapunov functions of the relaxed hybrid
automaton are valid local Lyapunov functions for the original automaton. 

\subsection*{Termination of the Integrated Algorithm}

\begin{theorem}
	The proposed algorithm presented in \autoref{alg:overall} terminates.
\end{theorem}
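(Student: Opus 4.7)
The plan is to exhibit a well-founded measure that strictly decreases across iterations of the \texttt{while}-loop in \autoref{alg:overall}, and to observe that every auxiliary routine called by the algorithm terminates in itself. The natural measure is the cardinality $\card{\Ally}$ of the set of split transitions.

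First I would argue that \FuncSty{relax} (\autoref{alg:func_relax}) terminates: it performs a single pass over the finite set $\HyAt.\SetOfTrans$, and for each transition executes a bounded amount of work. After this call, $\Ally$ is finite with $\card{\Ally} \leq \card{\SetOfTrans}$, since $\Ally$ gains at most one pair per original transition. Second, the body of the \texttt{while}-loop either returns immediately (when the \FuncSty{applyDecomposition} result is \KwSty{stable}, or when no split transition lies in the failed subgraph), or it invokes \FuncSty{reconstruct} on some $(\trans_1,\trans_2)\in\Ally$. By inspection of \autoref{alg:func_reconstruct}, this call strictly decreases $\card{\Ally}$ by exactly one (it performs $\Ally \leftarrow \Ally \setminus \{(\trans_1,\trans_2)\}$) and otherwise does a bounded number of set operations. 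Hence after at most $\card{\Ally}$ iterations the loop guard $\Ally \neq \emptyset$ fails or an early return has been triggered, giving termination of the loop.

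After the loop, a single final call to \FuncSty{applyDecomposition} is executed and its result returned. So the only remaining obligation is that \FuncSty{applyDecomposition} itself terminates on every invocation. This is exactly the termination guarantee of the decompositional construction of \cite{OehlerkingT09} sketched in \autoref{sec:decomposition}: the interleaving of reduction and mode-splitting steps, with the reduction step collapsing at least one cycle into a single vertex and the mode-splitting step being invoked under a bounded ordering strategy, terminates on any finite constraint graph. Since the relaxed automaton $\HyAt^{\sharp}$ has finite mode set $\SetOfModes \cupdot \{\Mode_c\}$ and finite transition set $\SetOfTrans^{\sharp}$, each invocation of \FuncSty{applyDecomposition} terminates.

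Combining the three observations -- \FuncSty{relax} terminates, the \texttt{while}-loop terminates because $\card{\Ally}$ is a strictly decreasing natural-number variant on all non-returning iterations, and the two kinds of calls in the loop body (\FuncSty{applyDecomposition}, \FuncSty{reconstruct}) both terminate -- yields termination of \autoref{alg:overall}. The one step I expect to require care is justifying the use of the termination result for \FuncSty{applyDecomposition}: one must confirm that the relaxed automaton fed to it satisfies the structural hypotheses assumed in \cite{OehlerkingT09} (finite graph, well-defined constraint labels on vertices and edges), which here is immediate from \autoref{def:cs_relax}.
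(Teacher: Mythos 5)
Your proposal is correct and takes essentially the same route as the paper: both arguments use the finiteness of \(\Ally\) and the fact that each non-returning loop iteration removes exactly one split transition via \FuncSty{reconstruct}, with the early-return cases handled separately. Your explicit treatment of the termination of \FuncSty{applyDecomposition} (appealing to \cite{OehlerkingT09}) is a point the paper's proof leaves implicit, but it does not change the substance of the argument.
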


\begin{proof}
	The function \FuncSty{relax} terminates since the copy of the set of transitions of
	\(\HyAt\) is finite and is not modified in the course of the algorithm.
	The while-loop terminates if either an
	\FuncSty{applyDecomposition} is successful, no pair for
	reconstruction can be identified, or the set \(\Ally\) is empty.
	In the first two cases, the algorithm terminates directly.
	For the last case, we assume that no call to \FuncSty{applyDecomposition} is
	successful and a spilt transition is always found. Then, in each iteration
	of the loop, one edge is removed from \(\Ally\).
	The set \(\Ally\) is
	finite because the relaxation function \FuncSty{relax} splits only
	finitely many edges.
	Thus, the set \(\Ally\) becomes eventually empty.
	Therefore, the loop terminates.
\end{proof}

\subsection*{Preservation of Stability}

\begin{theorem}
\label{thm:preserve_stability}
	For any hybrid automaton \(\HyAt\) and a sub-component
	\(\SetOfModes_{d}\),
	it holds:
	If a family of local Lyapunov functions \((\LF[\Mode])\) 
	proving
	\(\RelaxApp{\HyAt,\SetOfModes_{d}}\) to be GAS exists, then
	there exists a family of local Lyapunov functions for
	\(\HyAt\) proving \(\HyAt\) to be GAS.
\end{theorem}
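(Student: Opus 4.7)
The plan is to show that the restriction of $(\LF[\Mode])_{\Mode\in\SetOfModes^{\sharp}}$ to the original mode set $\SetOfModes$ already witnesses GAS of $\HyAt$ via \autoref{thm:lyapunov}; only $\LF[\Mode_c]$ is discarded. The mode and flow constraints transfer immediately: by \autoref{def:cs_relax} we have $\MapOfInvar^{\sharp}(\Mode)=\InvOf{\Mode}$ and $\MapOfFlow^{\sharp}(\Mode)=\FlowOf{\Mode}$ for every $\Mode\in\SetOfModes$, so the class-$\classKinf$ functions $\NormAlpha,\NormBeta,\NormGamma$ that witness \autoref{thm:lyapunov:mode} and \autoref{thm:lyapunov:flow} for $\LF[\Mode]$ in $\HyAt^{\sharp}$ continue to do so in $\HyAt$ verbatim.

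The nontrivial step is the transition constraint. Fix $\trans=\Trans{\Mode_1,\GuardSet,\UpdateFunc,\Mode_2}\in\SetOfTrans$. If $\{\Mode_1,\Mode_2\}\cap\SetOfModes_d=\emptyset$, then $\trans$ is retained in $\SetOfTrans^{\sharp}$ and \autoref{thm:lyapunov:trans} carries over directly. Otherwise, $\SetOfTrans^{\sharp}$ contains the split pair $\trans_1=\Trans{\Mode_1,\GuardSet,\IdentFunc,\Mode_c}$ and $\trans_2=\Trans{\Mode_c,\GuardSet,\UpdateFunc,\Mode_2}$, both quantified over the original guard $\GuardSet$. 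Applying the hypothesis to $\trans_1$ yields $\LFof[\Mode_c]{\V{x}}=\LFof[\Mode_c]{\IdentOf{\V{x}}}\leq\LFof[\Mode_1]{\V{x}}$ for every $\V{x}\in\GuardSet$, and applying it to $\trans_2$ yields $\LFof[\Mode_2]{\UpdateOf{\V{x}}}\leq\LFof[\Mode_c]{\V{x}}$ at the same $\V{x}$. Chaining the two inequalities gives the required $\LFof[\Mode_2]{\UpdateOf{\V{x}}}\leq\LFof[\Mode_1]{\V{x}}$.

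The main subtle point --- and the only place where the construction could conceivably break --- is the chaining at a \emph{common} argument. It works here precisely because both halves of the split share the original $\GuardSet$ and because the first half uses the identity update, so that $\LF[\Mode_c]$ is evaluated at the same $\V{x}$ on both sides of $\Mode_c$. Any non-identity update in $\trans_1$ would destroy this alignment. The empty invariant of $\Mode_c$ is similarly benign: it makes $\Mode_c$'s mode and flow constraints vacuous, but they are never needed, because $\LF[\Mode_c]$ is only probed at points drawn from the original guards. All three parts of \autoref{thm:lyapunov} therefore hold for $(\LF[\Mode])_{\Mode\in\SetOfModes}$ on $\HyAt$, and GAS of $\HyAt$ with respect to $\SetOfVars'$ follows.
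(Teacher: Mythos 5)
Your proof is correct and takes essentially the same route as the paper's: mode and flow constraints transfer verbatim because \(\Relax\) leaves the invariants and flows of the original modes untouched, retained transitions are trivial, and for each split transition the two constraints over the shared guard \(\GuardSet\) are chained through \(\LFof[\Mode_{c}]{\V{x}}\) (exploiting the identity update on the first half) to recover \(\LFof[\Mode_{2}]{\UpdateOf{\V{x}}} \leq \LFof[\Mode_{1}]{\V{x}}\). The only cosmetic difference is that the paper wraps this chaining step in a proof by contradiction while you argue directly, and you additionally make explicit two points the paper leaves implicit --- that \(\LF[\Mode_{c}]\) is simply discarded and that the empty invariant of \(\Mode_{c}\) renders its own mode constraints vacuous and unneeded.
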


\begin{proof}
	Given a hybrid automaton \(\HyAt = (\SetOfVars, \SetOfModes, \SetOfTrans,
	\MapOfFlow, \MapOfInvar)\). 
	Let
	\(
		\RelaxApp{\HyAt,\SetOfModes_{d}}
		= (\SetOfVars^{\sharp}, \SetOfModes^{\sharp}, \SetOfTrans^{\sharp},
		\MapOfFlow^{\sharp}, \MapOfInvar^{\sharp})
		\allowbreak
		= \HyAt^{\sharp} 
	\), 
	be a graph structure-relaxed version of \(\HyAt\) where 
	\(\SetOfModes_{d}\subseteq\SetOfModes\) is the sub-component of \(\HyAt\) that has been
	relaxed. Further, let \((\LF[\Mode])\) be the family of local Lyapunov
	functions that prove stability of \(\HyAt^{\sharp}\) and let \(\Ally\) be
	the set of split transitions -- some transition may have been
	reconstructed.
	Now, it must be shown that \(\LF[\Mode]\) are
	valid Lyapunov functions for \(\HyAt\).\newline
	The mode constraints of \autoref{thm:lyapunov} trivially hold, since
	\(\Relax\) alters neither the flow functions nor the invariants, \ie,
	\(
		\forall \Mode\in\SetOfModes :
		\MapOfFlow^{\sharp}\left(\Mode\right) = \FlowOf{\Mode} \land
		\MapOfInvar^{\sharp}\left(\Mode\right) = \InvOf{\Mode}
	\).
	The transition constraint also holds for all transitions that are not
	altered by \(\Relax\) or have been reconstructed, \ie,
	\(\SetOfTrans\cap\SetOfTrans^{\sharp}\).
	Now assume that
	\(\trans\in\SetOfTrans\setminus\SetOfTrans^{\sharp}\) is an arbitrary
	transition for which the transition constraint does not hold. We show that this leads to a
	contradiction.
	Due to the definition of \(\Relax\)
	all transition in \(\SetOfTrans\setminus\SetOfTrans^{\sharp}\)
	are split transitions and there is a corresponding pair in \(\Ally\).
	Let \(\tuple{\trans_1,\trans_2}\in\Ally\) be the pair corresponding to
	\(\trans=(\Mode_1,\GuardSet,\UpdateFunc,\Mode_2)\).
	Since \((\LF[\Mode])\) is a valid family of
	local Lyapunov function for \(\HyAt^{\sharp}\), the transition constraint holds for all
	transitions in \(\SetOfTrans^{\sharp}\). In particular, the transition constraint holds
	for
		\(\trans_1=(\Mode_1,\GuardSet,\IdentFunc,\Mode_c)\)
	and
		\(\trans_2=(\Mode_c,\GuardSet,\UpdateFunc,\Mode_2)\). Thus,
\iftwocolumn
	\begin{align*}
			& \holds[\forall x \in \GuardSet]{			\LFof[\Mode_{c}]{\IdentOf{x}} \leq \LFof[\Mode_{1}]{x}} \\
		\land & \holds[\forall x \in \GuardSet]{			\LFof[\Mode_{2}]{\UpdateOf{x}} \leq \LFof[\Mode_{c}]{x}}\text{.}
	\end{align*}
\else
	\begin{align*}
		\holds[\forall x \in \GuardSet]{			\LFof[\Mode_{c}]{\IdentOf{x}} \leq \LFof[\Mode_{1}]{x}}
		\land \holds[\forall x \in \GuardSet]{			\LFof[\Mode_{2}]{\UpdateOf{x}} \leq \LFof[\Mode_{c}]{x}}\text{.}
	\end{align*}
\fi
	It follows, that
	\[
		\holds[\forall x \in \GuardSet]{
			\LFof[\Mode_{2}]{\UpdateOf{x}}
				\leq \LFof[\Mode_{c}]{x}
				\leq \LFof[\Mode_{1}]{x}\text{.}
		}
	\]
	Therefore, the transition constraint holds for \(\trans\). But this contradicts the assumption.
	\(\lightning\)
\end{proof}

While \autoref{thm:preserve_stability} shows that stability of the relaxed
automaton yields stability of the original automaton, the contrary is not
true. \autoref{fig:example_unstable} shows a hybrid system where the
relaxation renders the system unstable. This example exploits that
the relaxation may introduce spurious trajectories. This happens if there are
transitions with overlapping guard sets connected to the central mode
\(\Mode_c\). A trajectory of the relaxed
automaton might then take the first part of a split transition to the
central mode \(\Mode_c\) and continues with the second part of a
different split transition. A transition corresponding to this behavior
might not exist in the unmodified hybrid automaton. While this does not
render our approach being incorrect, it may lead to difficulties since these
extra trajectories have to be GAS, too. In case of the system in
\autoref{fig:example_unstable}, new trajectories are introduced which allow
a trajectory to jump back from the mode \texttt{L} to \texttt{H} by taking the transitions
\(\trans_1,\trans_2\).
This behavior corresponds to leaving
\texttt{L} by the right self-loop and entering \texttt{H} by the left
self-loop, which is obviously impossible. However, due to the update the
value of \(x\) might increase as \(1 + 0.01(x-1)(x-10)>1\) for \(x<1\).

In general our relaxation introduces conservatism which is again reduced
step-by-step by the reconstruction. The degree of conservatism highly depends
on the guards of the transitions since the central mode relates all LLFs
of modes in \(\SetOfModes_d\). Therefore, if more guards are overlapping,
more LLFs have to be compatible even if not needed in the original automaton.

One possibility to counter-act this issue is to introduce a
new continuous variable in the relaxed automaton which is set to a unique
value per split transition: the update function of the first part of a
split transition sets the value used to guard the second part of the
transition.
Indeed, this trick discards any spurious trajectories
for the price of an additional continuous variable. However, since the
values of
that variable are somewhat artificial, a Lyapunov function may not
make use of that variable. Thus, this trick will not ease satisfying the
conditions of the Lyapunov theorem in general.

\begin{figure}[h]
\centering
\begin{subfigure}{0.45\linewidth}
	\includegraphics[width=\linewidth]{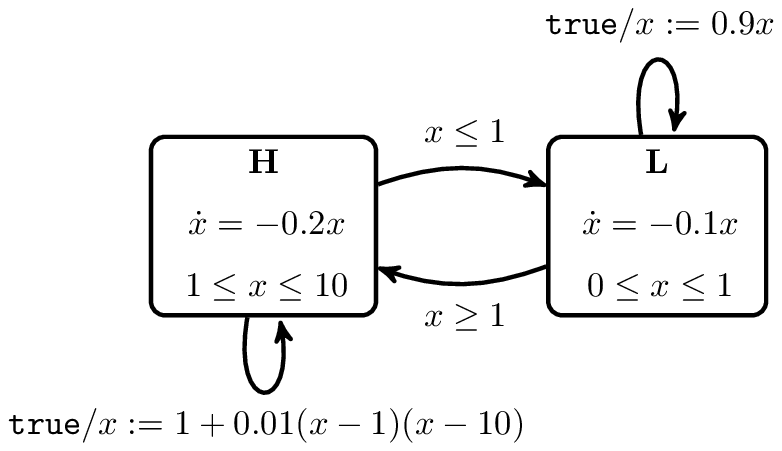}
\caption{The unmodified (stable) version.} 
\label{fig:example_unstable_orig}
\end{subfigure}
\hfill
\begin{subfigure}{0.45\linewidth}
	\includegraphics[width=\linewidth]{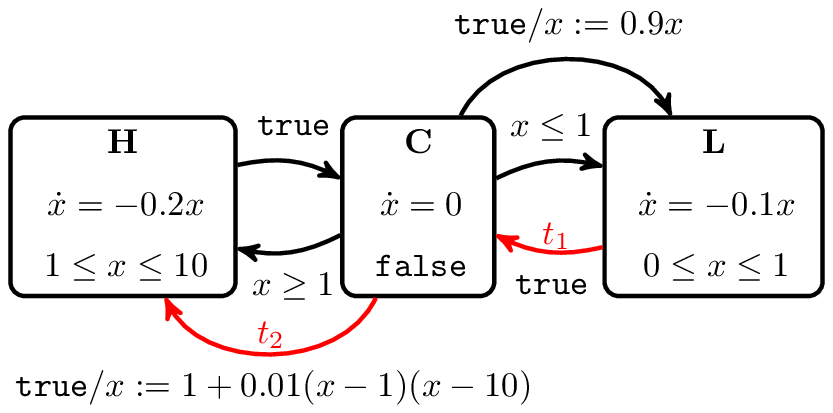}
\caption{The relaxed (unstable) version.} 
\label{fig:example_unstable_relaxed}
\end{subfigure}
\caption{A Hybrid System; Unstable after Relaxation}
\label{fig:example_unstable}
\vspace*{-1em}
\end{figure}
 \section{Application of the Relaxation} \label{sec:experiments}

In this section we present three examples where the graph structure-based
relaxation suggested in this paper improves the application of the
decomposition technique. 
The first example deals with the automatic cruise controller (ACC) of
\cite{Oehlerking2011thesis}.
The second example is the fully connected digraph \(K_3\). The \(K_3\) does
not represent
a concrete hybrid automaton but a potential graph structure of a hybrid
automaton. 
The last example is a spidercam. Here, the graph is not as fully connected as
the \(K_3\) example, but its density is already too high to apply
decomposition directly. 

We have implemented the decomposition and relaxation in python.
\autoref{tab:cycle_data} gives the graph properties and a comparison of the
number of reduction steps required by the decomposition with and without
relaxation (in the best case). 
The given data was obtained without
actually computing Lyapunov functions focusing on the graph related part of
the decomposition. In fact, computing Lyapunov functions for the spidercam via
decomposition without our relaxation fails after 18 steps.

\begin{table*}
	\scriptsize
\begin{tabularx}{\linewidth}{cXX|XXc|Xc}
	\hline
	&&&\multicolumn{3}{c|}{Decomposition}& \multicolumn{2}{c}{With Relaxation}\\
	Graph Structure & Nodes (\(n\)) & Edges & Reductions & Mode-Splittings &
	Time & Reductions & Time\\
	\hline
	directed \(K_1\) & 1 & 0 & 0 & 0 & 0.04s & 0 & 0.04s \\
	directed \(K_2\) & 2 & 2 & 1 & 0 & 0.04s & 2 & 0.04s \\
	directed \(K_3\) & 3 & 6 & 6 & 4 & 0.21s & 3 & 0.05s \\
	directed \(K_4\) & 4 & 12 & 47 & 25 & 1.15s & 4 & 0.05s \\
	directed \(K_5\) & 5 & 20 & 1852 & 352 & 13h22m & 5 & 0.05s \\ 		Spidercam & 9 & 32 & 753 & 287 & 1h46m & 9 & 0.06s\\
	Cruise Controller & 6 & 11 & 7 & 6 & 0.060s & 6 & 0.06s\\
	\hline
\end{tabularx}
\caption{Comparison of the Decomposition with and without Relaxation
			}
\label{tab:cycle_data}
\vspace*{-1em}
\end{table*}

\subsection*{Example 1: The Automatic Cruise Controller (ACC)}

\iftwocolumn
\begin{figure}[!h]
	\centering
	\includegraphics[width=\linewidth]{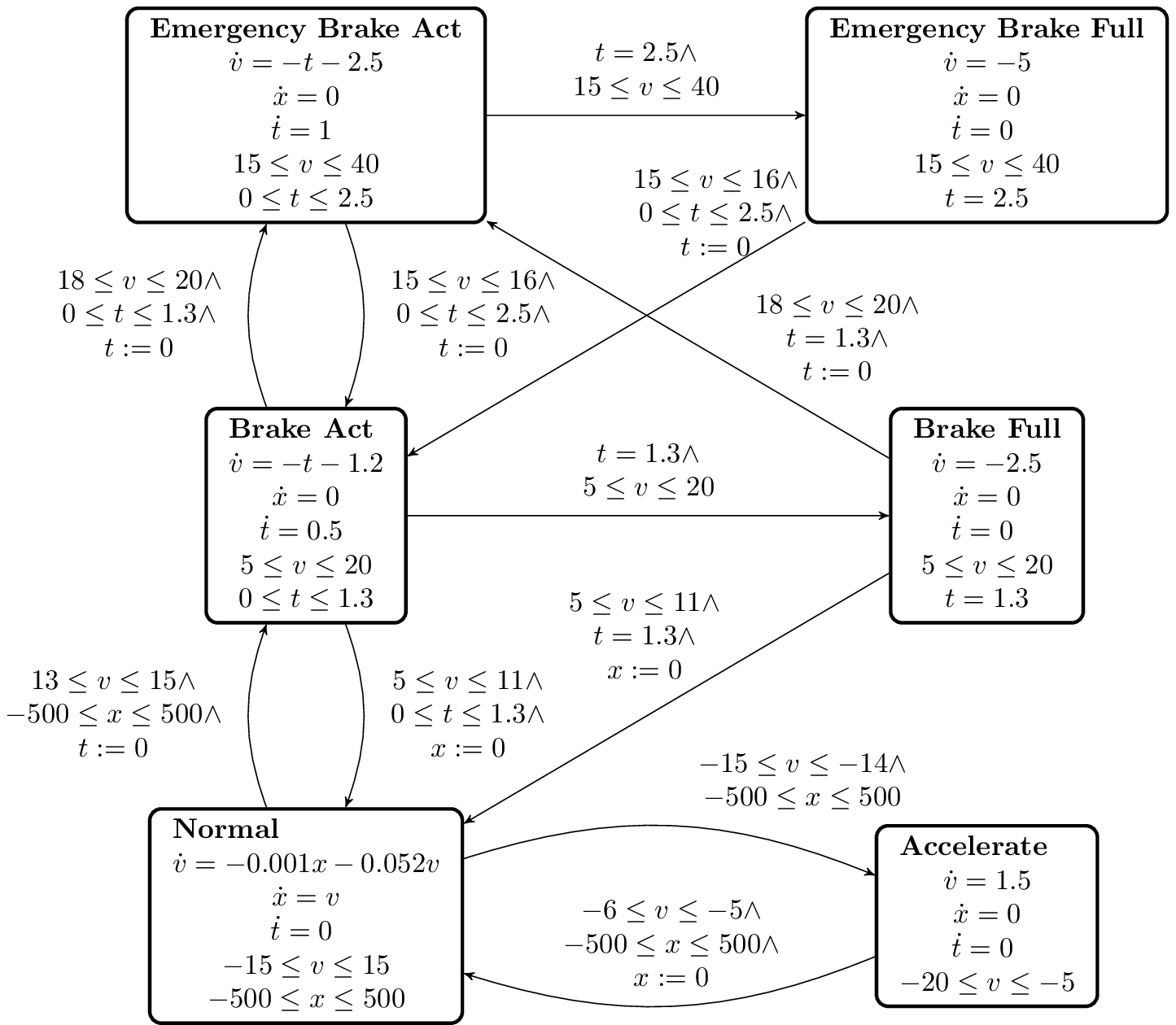}
	\caption{The Automatic Cruise Controller\cite{Oehlerking2011thesis}} 
	\label{fig:example_cruise_controller}
	\vspace*{-1em}
\end{figure}\else
\begin{figure}
	\begin{minipage}{0.39\linewidth}
	\begin{center}
		\begin{subfigure}{\linewidth}
		\centering
		\includegraphics[width=0.5\linewidth]{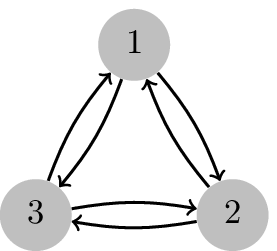}
		\caption{original} 
		\label{fig:example_k3-normal}
		\end{subfigure}
		\newline
		\begin{subfigure}{\linewidth}
		\centering
		\includegraphics[width=0.4\linewidth]{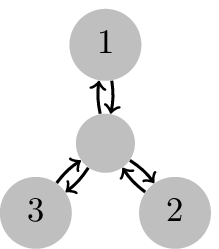}
		\caption{relaxed} 
		\label{fig:example_k3-relaxed}
		\end{subfigure}
		\caption{The \(K_3\).} 
		\label{fig:example_k3}
	\end{center}
	\end{minipage}
	\hfill
	\begin{minipage}{0.59\linewidth}
		\centering
		\includegraphics[width=\linewidth]{figures/cruise_controller}
		\caption{The Automatic Cruise Controller \cite{Oehlerking2011thesis}} 
		\label{fig:example_cruise_controller}
	\end{minipage}
	\vspace*{-1em}
\end{figure}\fi

The automatic cruise controller (ACC) regulates the velocity of a vehicle. 
\autoref{fig:example_cruise_controller} shows the controller as an automaton.
The task of the controller is to approach a user-chosen velocity -- indeed
the variable \(v\) represents the velocity relative to the desired velocity.

The ACC is globally asymptotically stable. It
can be proven stable using the original decomposition technique (cf.
\cite{Oehlerking2011thesis,MohlmannT13stabhyli}). Indeed,
the graph structure is sparse and thus, already well-suited for applying the
decomposition technique directly. In fact, only one more cycle needs to be
reduced compared to decomposition after relaxation
(cf. \autoref{tab:cycle_data}). Even though the relaxation is not needed
here, it also does not harm, though, it may be used for sparse graphs structures,
too.

\subsection*{Example 2: The directed \(K_3\)}

\iftwocolumn
\begin{figure}[h]
	\centering
	\begin{subfigure}{0.40\linewidth}
	\includegraphics[width=\linewidth]{figures/example_k3}
	\caption{The unmodified \(K_3\)} 
	\label{fig:example_k3-normal}
	\end{subfigure}
	\hfill
	\begin{subfigure}{0.40\linewidth}
	\includegraphics[width=0.8\linewidth]{figures/example_k3-relaxed}
	\caption{The relaxed \(K_3\)} 
	\label{fig:example_k3-relaxed}
	\end{subfigure}
	\caption{The \(K_3\)} 
	\label{fig:example_k3}
	\vspace*{-1em}
\end{figure}\fi

The directed \(K_n\) is a fully connected digraph with \(n\) nodes. The
\(K_3\) as well as a relaxed version of it is shown in
\autoref{fig:example_k3}.
In a fully connected
digraph, there is a single edge from each node to each other node, resulting in
a total number of $n(n-1)$ edges. The number of cycles, the
decomposition technique has to reduce, grows very fast with \(n\) which can
be seen in \autoref{tab:cycle_data}. In comparison, the number of cycles in
the relaxed version of the graph grows linearly with \(n\), assuming that the
edges can be concentrated\footnote{With \enquote{concentrating edges,} we mean that edges with the same
	source and target node are handled as a single edge for the cycle finding
	algorithm.
}.
Otherwise, after the relaxation, each original node has
\(n-1\) incoming and \(n-1\) outgoing edges where each edge connects the node
with the central node \(\Mode_c\). Each such combination forms a cycle between
\(\Mode_c\) and an original mode, giving a total of \(n(n-1)(n-1)\) cycles in
the worst case. This cubic growth is still much less than the number of
reductions without relaxation.

Such a graph might not be the result of a by-hand designed system but might be
the outcome of a synthesis or an automatic translation. However, the fast
growth of the cycles also indicates the high number of reduction and
therefore underapproximations.

\subsection*{Example 3: The Spidercam}

A spidercam is a movable robot equipped with a camera. It is used at sport
events such as a football matches.
The robot is connected to four cables. Each cable is attached to a motor
that is placed high above the playing field in a corner of a stadium. By winding
and unwinding the cables -- and thereby controlling
the length of the cables, -- the spidercam is able to reach nearly any position
in the three-dimensional space above the playing field.
\autoref{fig:example_spidercam} shows a very simple model of such a
spidercam in the plane. The target is to stabilize the camera at a certain
position. The continuous variables \(x\) and \(y\) denote the distance
relative to the desired position on the axis induced by the cables.

\iftwocolumn
\begin{figure}[h]
\centering
\includegraphics[width=\linewidth]{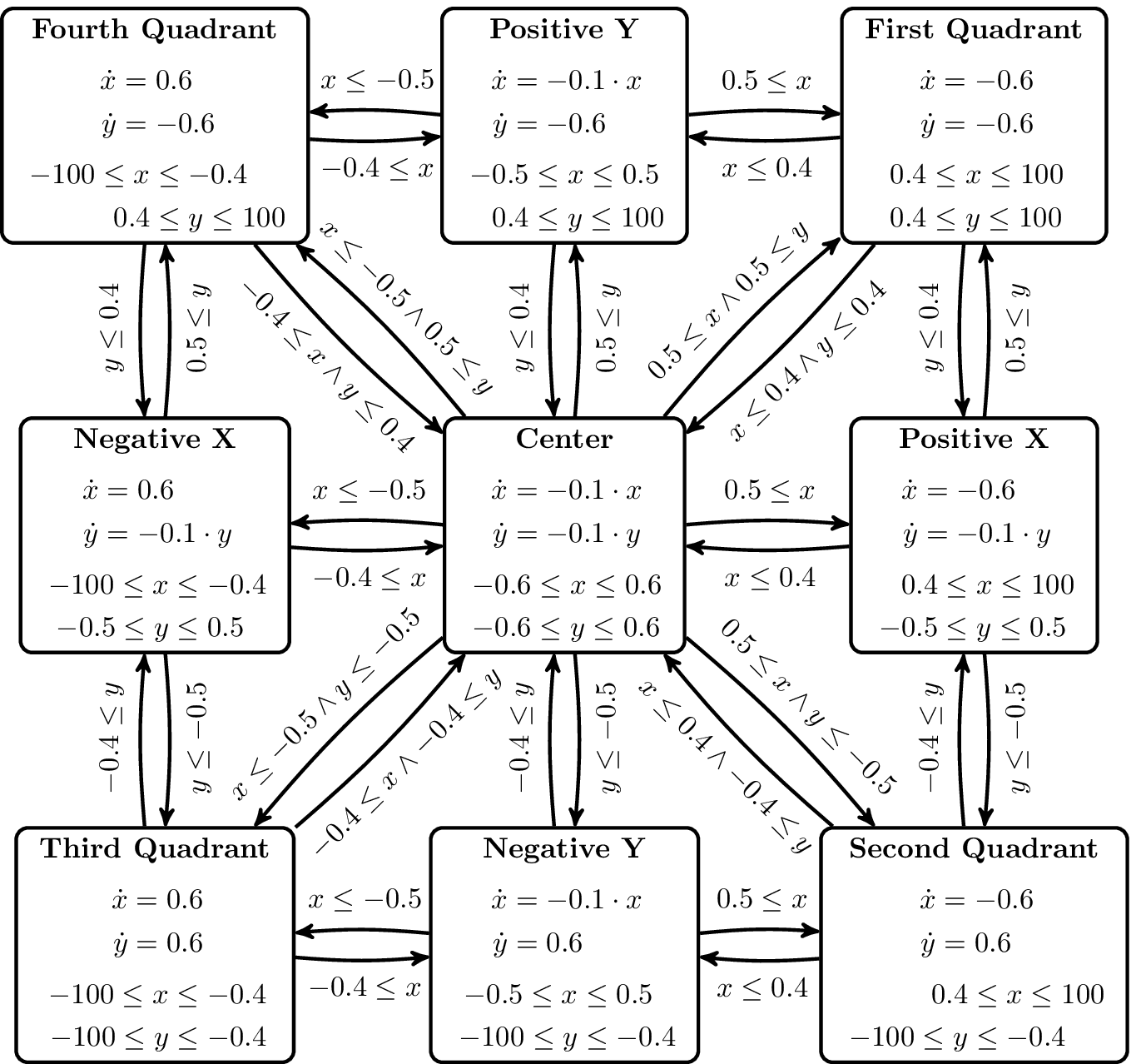}
\caption{The Simple Planar Spidercam} 
\label{fig:example_spidercam}
\vspace*{-1em}
\end{figure}\else
\begin{figure}[h]
	\centering
	\begin{subfigure}{0.49\linewidth}
	\includegraphics[width=\linewidth]{figures/example_spidercam}
	\caption{unmodified} 
	\label{fig:example_spidercam-normal}
	\end{subfigure}
	\hfill
	\begin{subfigure}{0.499\linewidth}
	\includegraphics[width=1.02\linewidth]{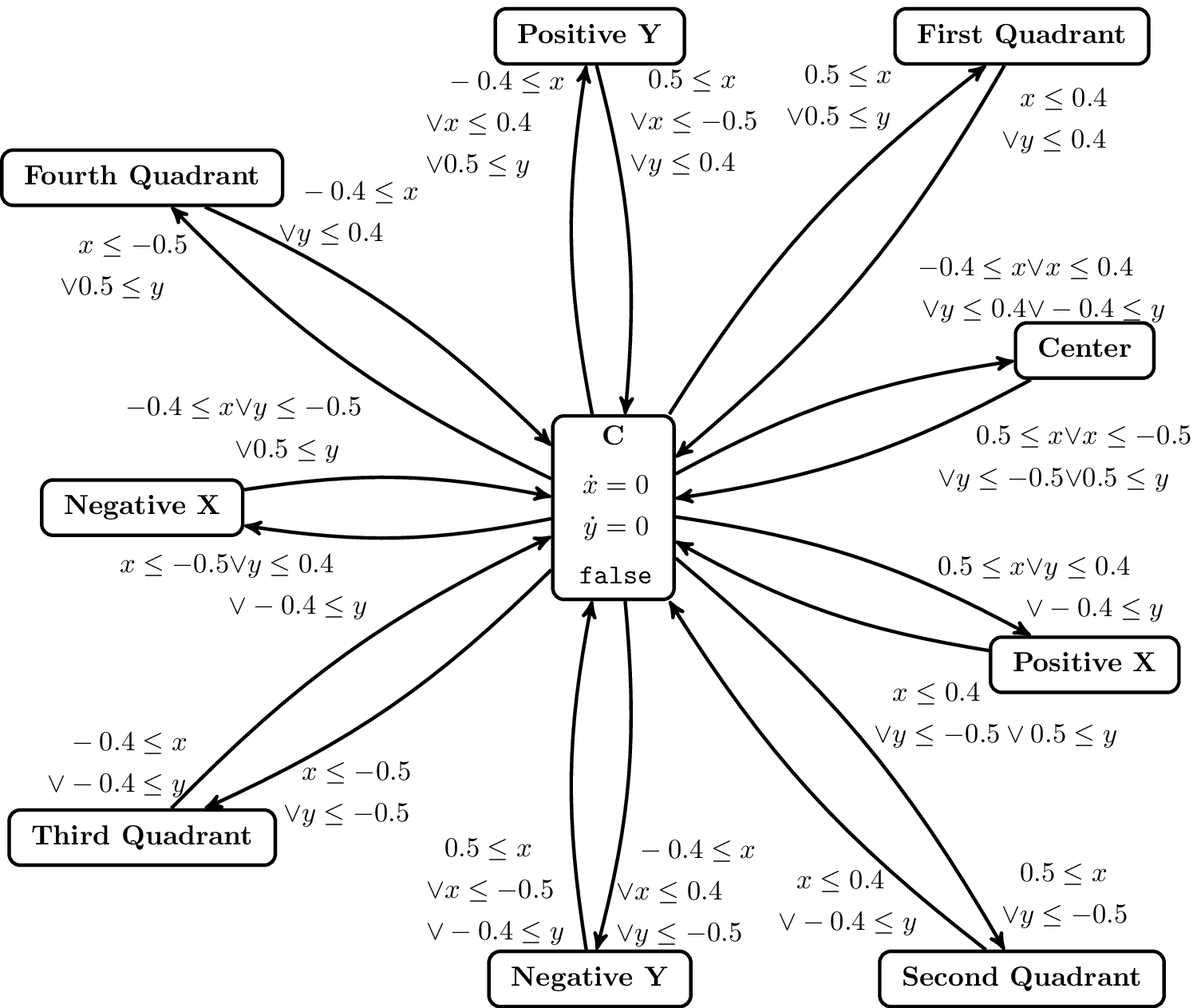}
	\caption{relaxed} 
	\label{fig:example_spidercam-relaxed}
	\end{subfigure}
	\caption{The Simple Planar Spidercam} 
	\label{fig:example_spidercam}
	\vspace*{-1em}
\end{figure}\fi

In the model, we assume a high-level control of the motor engines, \ie, the
movement is on axis \(\dot{x}\) and \(\dot{y}\) instead of a low-level
control of each individual motor. The model has nine modes: one mode that controls
the behavior while being close to the
desired position, four modes corresponding to nearly straight movements along one of
the axes and four modes cover the quadrants between the axes. The maximal
velocity in the direction of each axis is limited from above by
\(0.6\frac{m}{s}\).
Thus, in the four modes corresponding to the quadrants, the movement in each
direction is at full speed. In the four modes corresponding to the axes, the
movement on the particular axis is at full speed while the movement orthogonal to the
axis is proportional to the distance. In the last mode, the speed in both
directions is proportional to the distance.

The spidercam is globally asymptotically stable which can be proven
fully automatically.
However, it is not possible to 
obtain a piecewise Lyapunov function via decomposition
without relaxation due to accumulating underapproximations of the partial
solutions and the high number of cycles that have to be reduced.\footnote{We used the implementation in \Stabhyli~\cite{MohlmannT13stabhyli}
	which currently does not contain strategies to handle the situation where
	no reduction is possible. The current implementation would then simply
	fail.
	Even though, it is theoretically possible to
	perform some form of backtracking, it is hard to decide which
	underapproximation must be refined.
} 
The reason is that each time a cycle is reduced, the feasible set of a
subproblem is underapproximated by a finite set of solutions which finally
results in a feasible set becoming empty and no LLFs can be found.

In contrast, relaxing the graph structure followed by applying the
decomposition is successful immediately. In particular, no reconstruction step
is required.

 \section{Summary} \label{sec:summary}

We have presented a relaxation technique based on the graph structure of a
hybrid automaton. The relaxation exploits super-dense switching or cascaded
transitions to modify the transitions of the hybrid automaton in a way that
improves the decompositional proof technique of \cite{OehlerkingT09}. The idea
is to re-route every transition through a new \enquote{fake} node. Thus, if
in the original automaton a single transition is taken, then the relaxed
automaton has to take the cascade of two transitions to achieve the same
result. However, the relaxed automaton's graph structure is better suited
towards decomposition. Furthermore, the procedure can be automated which is very much desired
as our focus is the automation of
Lyapunov function-based stability proofs.
Furthermore, in \autoref{sec:experiments}, we
successfully employed the proposed technique in some examples.

The decompositional proof
technique is particularly well-suited to prove stability of large-scale hybrid
systems because it allows:
\begin{inparaenum}[1.]
	\item to decompose a monolithic proof into several smaller subproofs,
	\item to reuse subproofs after modifying the hybrid system, and
	\item to identify critical parts of the hybrid automaton.
\end{inparaenum}
All these benefits are not available when 
the hybrid system exhibits a very dense graph structure of the automaton because that would
lead to an enormous number of computational steps required in the
decomposition. The proposed relaxation overcomes these matters in the best
case. If the relaxation is too loose, then our technique
falls back to step-by-step reconstruct the original automaton. Each step
increases the effort needed for the
decomposition until a proof succeeds or ultimately -- in the worst case --
the original automaton gets decomposed.
Future research will include a tighter coupling of the decomposition
and our relaxation approach. A first step will be to not discard the progress made by
the decomposition but reuse the \enquote{gained knowledge}. Doing so will
greatly reduce the
computational effort.

\bibliographystyle{bst/eptcs.bst}
\end{document}